\newtheorem{theorem}{Theorem}
\newtheorem{definition}{Definition}
\newtheorem{lemma}{Lemma}
\newtheorem{corollary}{Corollary}
\newtheorem{remark}{Remark}
\renewcommand{\vec}[1]{\mathbf{#1}}
\definecolor{emerald}{RGB}{69,155,61}
\definecolor{gold}{RGB}{244,216,51}
\definecolor{pink}{RGB}{235,44,206}
\tikzstyle{int}=[draw, fill=cyan!20, minimum size=2em]
\tikzstyle{int_blue}=[draw, fill=blue!20, minimum size=2em]
\tikzstyle{int_green}=[draw, fill=green!20, minimum size=2em]
\tikzstyle{int_red}=[draw, fill=red!20, minimum size=2em]
\tikzstyle{init} = [pin edge={to-,thin,black}]
\begin{document}
\title{\LARGE{Capacity of Backscatter Communication\\ under Arbitrary Fading Dependence}}
\author{F.\,Rostami Ghadi, F.\,J. Mart\'in-Vega and  F.\,J. L\'opez-Mart\'inez
\thanks{Manuscript received May xx, 2021; revised XXX. This work was funded by the Spanish Government (Ministerio de Ciencia e Innovaci\'on) through grant TEC2017-87913-R, and by Junta de Andalucia (project P18-RT-3175) and PAIDI 2020.}
\thanks{The authors are with Dpto. Ingenieria de Comunicaciones, Universidad de Malaga, 29071 Malaga, Spain. (e-mail: $\rm farshad@ic.uma.es$, $\rm fjmvega@ic.uma.es$, $\rm fjlopezm@ic.uma.es$)}%, $\rm fjlopezm@ic.uma.es$)}
%\thanks{This work has been submitted for publication. Copyright may be transferred without notice, after which this version may no longer be accessible.}
}
\maketitle
\begin{abstract}
%\abstract{
We analyze the impact of arbitrary dependence between the forward and backward links in backscatter communication systems. Specifically, we quantify the effect of positive and negative dependence between these fading links on channel capacity, using Copula theory. The benefits of this approach are highlighted over the classical framework of linear dependence based on Pearson's correlation coefficient, which is also analyzed. Results show that for a fixed transmit power budget, capacity grows with positive dependence as well as with fading severity in the low signal-to-noise ratio (SNR) regime. Conversely, fading dependence becomes immaterial in the high SNR regime.
%}
\end{abstract}

\begin{IEEEkeywords}
	Backscatter communications, capacity, correlation, dependence, fading. 
\end{IEEEkeywords}

\section{Introduction}
Emerging technologies such as Radio Frequency Identification (RFID) systems and the Internet of Things (IoT) have led to significant attention being paid to backscatter communication (BC) in the recent years. In contrast to conventional radio frequency (RF) communication systems, BC exploits reflected power to modulate the signals, which leads to low power consumption and cost \cite{stockman1948communication}. One of the distinctive characteristics of BC is the non-negligible correlation between the forward (i.e., transmitter-to-tag) and backward (i.e., tag-to-reader) links \cite{griffin2007link,alhassoun2019theoretical}%griffin2007link
, especially when the transmitter and the reader are co-located. From a communication-theoretic perspective, the equivalent channel observed by the receiver is built as the product of two correlated random variables (RVs), which largely complicates the performance evaluation of BC systems. Hence, related literature is scarce \cite{zhang2018backscatter,bekkali2014performance,gao2016performance} and mostly focused on outage-based and error rate metrics. Only recently, the effect of link correlation in the capacity of BC was addressed in \cite{matez2020effect} for the case of Rayleigh fading, suggesting that correlation could be beneficial in some instances to improve performance. 

The role of general dependence structures beyond linear correlation is gaining momentum in the wireless community, since the consideration of potentially negative dependences between the different sources of randomness is known to have an impact on system performances \cite{besser2020copula,ghadi2020copula,besser2020copula1,ghadi2020copula1}. In this regard, Copula theory is a flexible procedure that allows for incorporating both positive and negative dependence structures between RVs, which is not always possible when conventional linear correlation is used to describe the dependency.

With all the above considerations, several practical questions in BC remain unanswered to date: (\emph{i}) What is the effect of more general dependence structures between the forward and backward links?; (\emph{ii}) How does fading severity affect performance of BC in these scenarios? In this work, we combine Copula theory with conventional statistical techniques to analyze a general BC scenario on which any arbitrary dependence pattern can be considered. The case of Nakagami-$m$ fading model is used to exemplify how fading severity affects system performance. Our general formulation also allows to derive tractable asymptotic expressions for the capacity of BC systems in the low/high signal-to-noise ratio (SNR) regimes.

\vspace{-3mm}
\section{System Model}
We consider a general BC system %(see Fig. \ref{system-model}) 
consisting of a transmitter, a passive tag and a reader, where the forward (i.e., transmitter-to-tag) channel $h_f$ and the backscatter (i.e., tag-to-reader) channel $h_b$ are dependent RVs. For simplicity and without loss of generality, the different agents are assumed to be single-antenna devices. Thus, the instantaneous received signal power at the reader can be compactly expressed as \cite{bekkali2014performance}: $P_R=P_TL_Tg_fg_b=\overline{P}_Rg_fg_b$,
where $P_T$ denotes the transmit power, and $L_T$ encapsulates a number of effects including polarization losses, path losses
and the tag's power transfer efficiency among other impairments.
The terms $g_i=|h_i|^2$ for $i\in\{f,b\}$ are the fading power channel coefficients associated to the forward and backward links, respectively. The fading coefficients are assumed normalized, meaning that $\mathbb{E}[g_f]=\mathbb{E}[g_b]=1$, where $\mathbb{E}[\cdot]$ denotes the expectation operator. Therefore, $\overline{P}_R$ represents the average receive power when the forward and backward links are independent. The instantaneous SNR at the reader is given by
\begin{align}
\label{eqsys}
\gamma=\tfrac{\overline{P}_R g_fg_b}{N_0}=\hat{\gamma}g_fg_b,
\end{align}
where $N_0$ is the noise power and $\hat{\gamma}=\frac{\overline{P}_R}{N_0}$ is the average SNR at the receiver side \emph{in the absence of correlation}, i.e., when $\mathbb{E}[g_fg_b]=\mathbb{E}[g_f]\mathbb{E}[g_b]=1$. However, in the general case of arbitrary dependence between $g_f$ and $g_b$, the expected value of the product channel will be determined by the underlying joint distribution of $g_f$ and $g_b$. 
 \vspace{-0.3cm}
\section{SNR distributions}
\subsection{Preliminary definitions}
In order to determine the distribution of $\gamma$ in the general case, we now briefly review some basic definitions and properties of the two-dimensional Copulas \cite{nelsen2007introduction}.

\begin{definition}[Two-dimensional Copula]
	Let $\vec{X}=(X_1,X_2)$ be a vector of two RVs with marginal cumulative distribution functions (CDFs) $F_{X_j}(x_j)=\Pr(X_j\leq x_j)$ for $j\in\{1,2\}$, respectively. The relevant bivariate CDF is defined as:%\vspace{-1ex}
	\begin{align}
	F_{X_1,X_2}(x_1,x_2)=\Pr(X_1\leq x_1,X_2\leq x_2),
	\end{align}
	then, the Copula function $C(u_1,u_2)$ of the random vector $\vec{X}$ defined on the unit hypercube $[0,1]^2$ with uniformly distributed RVs $U_j:=F_{X_j}(x_j)$ for $j\in\{1,2\}$ over $[0,1]$ is given by
	\begin{align}
	C(u_1,u_2)=\Pr(U_1\leq u_1,U_2\leq u_2).
	\end{align}
\end{definition}
\begin{theorem}[Sklar's theorem]\label{thm-sklar}
	Let $F_{X_1,X_2}(x_1,x_2)$ be a joint CDF of RVs with marginals $F_{X_j}(x_j)$ for $j\in\{1,2\}$. Then, there exists one Copula function $C(\cdot,\cdot)$ such that for all $x_j$ in the extended real line domain $\mathbb{R}$,%
	\begin{align}\label{sklar}
	F_{X_1,X_2}(x_1,x_2)=C\left(F_{X_1}(x_1),F_{X_2}(x_2)\right).
	\end{align}
\end{theorem}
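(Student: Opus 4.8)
The plan is to establish the existence of the copula $C$ in two stages: first the analytically straightforward case of continuous marginals, and then the general case by an extension argument. For the continuous case I would invoke the \emph{probability integral transform}: when $F_{X_j}$ is continuous, the random variable $U_j := F_{X_j}(X_j)$ is uniformly distributed on $[0,1]$. I would then \emph{define} the candidate copula directly as the joint CDF of the transformed pair,
\begin{align}
C(u_1,u_2) := \Pr\left(F_{X_1}(X_1)\le u_1,\, F_{X_2}(X_2)\le u_2\right),
\end{align}
and verify that it satisfies the defining properties of a copula (groundedness, uniform margins, and $2$-increasingness), all of which descend from the corresponding properties of a bivariate CDF. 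The Sklar identity \eqref{sklar} then reduces to substituting $u_j=F_{X_j}(x_j)$ and checking that $F_{X_1,X_2}(x_1,x_2)$ is recovered; this holds because, under continuity of $F_{X_j}$, the events $\{X_j\le x_j\}$ and $\{F_{X_j}(X_j)\le F_{X_j}(x_j)\}$ coincide up to a probability-zero set.

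The second stage handles the general case in which the marginals may have jumps (discrete or mixed distributions), and this is where the genuine difficulty lies. When $F_{X_j}$ is not continuous, the ranges $\mathrm{Ran}(F_{X_j})$ need not fill $[0,1]$, so the construction above only defines a function on the product $\mathrm{Ran}(F_{X_1})\times\mathrm{Ran}(F_{X_2})$, i.e.\ a \emph{subcopula}. The plan is to (i) define the subcopula $C'$ on this restricted domain via $C'\!\left(F_{X_1}(x_1),F_{X_2}(x_2)\right)=F_{X_1,X_2}(x_1,x_2)$, (ii) confirm it is well-defined and inherits the subcopula axioms from the joint CDF, and (iii) extend $C'$ to a full copula $C$ defined on all of $[0,1]^2$.

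The main obstacle is therefore step (iii), the subcopula-to-copula extension. I would fill the gaps in $\mathrm{Ran}(F_{X_j})$ by a bilinear interpolation over the rectangular cells determined by the closure of the range, and argue that this interpolation preserves both the margin conditions and the $2$-increasing property, so that the resulting $C$ is a genuine copula agreeing with $C'$ on its original domain. This is essentially a Carath\'eodory-type extension of a finitely determined set function; the Lipschitz continuity bounds satisfied by copulas are precisely what guarantee the interpolation cannot violate the defining axioms. Finally, for the uniqueness asserted by the theorem I would observe that $C$ is uniquely pinned down only on $\mathrm{Ran}(F_{X_1})\times\mathrm{Ran}(F_{X_2})$; hence it is globally unique exactly when both marginals are continuous (so the ranges are dense in $[0,1]$), whereas at jump points the interpolated extension, and thus $C$ off the range, is not unique.
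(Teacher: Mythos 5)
Your two-stage argument --- probability integral transform for continuous marginals, then subcopula construction and bilinear-interpolation extension over the gaps in the ranges for the general case --- is correct and is precisely the classical proof of Sklar's theorem in the monograph of Nelsen, which the paper cites in lieu of giving any proof of its own. Your closing observation that $C$ is pinned down only on $\mathrm{Ran}(F_{X_1})\times\mathrm{Ran}(F_{X_2})$, and is therefore globally unique exactly when both marginals are continuous, is likewise correct and in fact sharpens the paper's loose phrasing that \emph{there exists one Copula function}.
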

\begin{corollary}\label{col-joint}
	Applying the chain rule to Theorem \ref{thm-sklar}, the joint probability density function (PDF) $f_{X_1,X_2}(x_1,x_2)$ is given as:
	\begin{align} \nonumber
	f_{X_1,X_2}(x_1,x_2)=f_{X_1}(x_1)f_{X_2}(x_2)c\big(F_{X_1}(x_1),F_{X_2}(x_2)\big),
	\end{align}
	where $c\big(F_{X_1}(x_1),F_{X_2}(x_2)\big)=\frac{\partial^2 C(F_{X_1}(x_1),F_{X_2}(x_2))}{\partial F_{X_1}(x_1)\partial F_{X_2}(x_2)}$  is the Copula density function and $f_{X_j}(x_j)$ for $j\in\{1,2\}$ are the marginal PDFs, respectively.
\end{corollary}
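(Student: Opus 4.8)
The plan is to obtain the claimed factorization by direct differentiation of the joint CDF furnished by Sklar's theorem. Since the joint PDF is, by definition, the mixed second-order partial derivative of the joint CDF, $f_{X_1,X_2}(x_1,x_2)=\frac{\partial^2}{\partial x_1\partial x_2}F_{X_1,X_2}(x_1,x_2)$, I would substitute the representation $F_{X_1,X_2}(x_1,x_2)=C(F_{X_1}(x_1),F_{X_2}(x_2))$ from \eqref{sklar} and carry out the differentiation via the chain rule, treating $u_1:=F_{X_1}(x_1)$ and $u_2:=F_{X_2}(x_2)$ as intermediate variables.

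First I would differentiate with respect to $x_1$. Because $C$ depends on $x_1$ only through $u_1$, the chain rule gives $\frac{\partial}{\partial x_1}C(u_1,u_2)=\frac{\partial C}{\partial u_1}\frac{\mathrm{d}u_1}{\mathrm{d}x_1}=\frac{\partial C}{\partial u_1}f_{X_1}(x_1)$, where the last equality uses the fundamental fact that the derivative of a marginal CDF is its density, $\frac{\mathrm{d}}{\mathrm{d}x_1}F_{X_1}(x_1)=f_{X_1}(x_1)$. Next I would differentiate this intermediate result with respect to $x_2$. Since $f_{X_1}(x_1)$ does not depend on $x_2$, it passes through as a constant factor, and a second application of the chain rule to $\frac{\partial C}{\partial u_1}$ yields $\frac{\partial^2 C}{\partial u_1\partial u_2}\frac{\mathrm{d}u_2}{\mathrm{d}x_2}=\frac{\partial^2 C}{\partial u_1\partial u_2}f_{X_2}(x_2)$. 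Collecting the factors gives $f_{X_1,X_2}(x_1,x_2)=f_{X_1}(x_1)f_{X_2}(x_2)\frac{\partial^2 C(u_1,u_2)}{\partial u_1\partial u_2}$, and identifying the mixed derivative with the copula density $c(u_1,u_2)$ from the statement completes the derivation.

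The computation itself is an entirely routine double application of the chain rule; the only real care needed is in the regularity assumptions that make each step legitimate. Specifically, I would note that the marginals $F_{X_j}$ must be absolutely continuous so that the densities $f_{X_j}$ exist, and the copula $C$ must be twice differentiable on $[0,1]^2$ so that the mixed partial $c=\partial^2 C/\partial u_1\partial u_2$ is well defined and the order of differentiation is immaterial (Schwarz's theorem). Under these standard smoothness hypotheses, which hold for all the continuous fading models considered subsequently, the chain-rule steps are justified, so the main obstacle is not analytical difficulty but merely the bookkeeping of ensuring these conditions are in force.
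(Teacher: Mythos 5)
Your proposal is correct and follows exactly the route the paper intends: the corollary is justified in the text simply as ``applying the chain rule to Theorem~\ref{thm-sklar}'', and your two successive chain-rule differentiations of $C(F_{X_1}(x_1),F_{X_2}(x_2))$ are the explicit version of that argument. Your added remarks on absolute continuity of the marginals and the existence of the mixed partial $c=\partial^2 C/\partial u_1\partial u_2$ are appropriate regularity bookkeeping that the paper leaves implicit.
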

\begin{theorem}[Fr\'echet-Hoeffding bounds]
\label{theo:Copula bounds}
For any Copula function $C:[0,1]^2 \mapsto [0,1]$ and any $(u_1, u_2) \in [0,1]^2$, the following bounds hold:
$C^{-} \prec C \prec C^{+}$; where $C_1 \prec C_2$ if $C_1(u_1, u_2) \leq C_2(u_1, u_2) \; \forall (u_1,u_2)\in[0,1]^2$, and
\begin{align}\label{eq:C bounds}
& C^{-}(u_1, u_2) = \max(u_1 + u_2 - 1, 0) \nonumber \\
& C^{+}(u_1, u_2) = \min(u_1, u_2)
\end{align}
\end{theorem}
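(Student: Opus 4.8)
The plan is to exploit the probabilistic characterization of the Copula supplied by the Definition, namely $C(u_1,u_2)=\Pr(U_1\leq u_1,U_2\leq u_2)$ with each $U_j$ uniform on $[0,1]$, so that $\Pr(U_j\leq u_j)=u_j$ and $\Pr(U_j> u_j)=1-u_j$. Both inequalities will then follow from elementary properties of the probability measure (monotonicity and the union bound), so there is no need to invoke the analytic, ``$2$-increasing'' characterization of a Copula.

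For the upper bound I would use monotonicity of the measure. Since the event $\{U_1\leq u_1,U_2\leq u_2\}$ is contained in each of $\{U_1\leq u_1\}$ and $\{U_2\leq u_2\}$, we obtain $C(u_1,u_2)\leq\Pr(U_1\leq u_1)=u_1$ and likewise $C(u_1,u_2)\leq u_2$. Taking the smaller of the two one-sided bounds yields $C(u_1,u_2)\leq\min(u_1,u_2)=C^{+}(u_1,u_2)$.

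For the lower bound I would complement the joint event and apply Boole's inequality. Writing $\{U_1\leq u_1,U_2\leq u_2\}$ as the complement of $\{U_1> u_1\}\cup\{U_2> u_2\}$,
\[
C(u_1,u_2)=1-\Pr\big(\{U_1> u_1\}\cup\{U_2> u_2\}\big)\geq 1-(1-u_1)-(1-u_2)=u_1+u_2-1.
\]
Combining this with the trivial bound $C(u_1,u_2)\geq 0$ (a probability is nonnegative) gives $C(u_1,u_2)\geq\max(u_1+u_2-1,0)=C^{-}(u_1,u_2)$, which closes the chain $C^{-}\prec C\prec C^{+}$ pointwise on $[0,1]^2$.

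The argument is essentially self-contained and I do not anticipate a genuine obstacle. The only point requiring mild care is bookkeeping of the inequality directions: the two separate one-sided estimates must be merged via $\min$ for the upper side and via $\max$ (together with nonnegativity) for the lower side, rather than collapsing into a single inequality. Had one started instead from the axiomatic definition of a Copula (grounded, uniform margins, $2$-increasing), the same bounds would emerge but the lower bound would demand a slightly more delicate use of the $2$-increasing property applied on the rectangle $[u_1,1]\times[u_2,1]$; the probabilistic route above is cleaner and is already licensed by the Definition stated earlier, so I would present that one.
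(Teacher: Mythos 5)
Your proof is correct and self-contained, but there is nothing in the paper to compare it against: the paper states the Fr\'echet--Hoeffding bounds without proof, importing them as a classical result from the copula literature \cite{nelsen2007introduction}. Measured against the standard textbook argument instead, your route is the probabilistic counterpart of Nelsen's analytic proof: the upper bound $C(u_1,u_2)\leq \min(u_1,u_2)$ is there obtained from monotonicity of $C$ in each argument via $C(u_1,u_2)\leq C(u_1,1)=u_1$ and $C(u_1,u_2)\leq C(1,u_2)=u_2$, and the lower bound from the $2$-increasing property applied to the rectangle $[u_1,1]\times[u_2,1]$, i.e. $1-u_1-u_2+C(u_1,u_2)\geq 0$ --- exactly the step you flagged as the delicate point of the axiomatic route, and exactly the analytic shadow of your complementation-plus-Boole argument, since $\Pr(U_1>u_1,\,U_2>u_2)\geq 0$ is the probabilistic reading of that rectangle inequality. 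Your approach is licensed directly by the paper's Definition~1, which defines the copula as $C(u_1,u_2)=\Pr(U_1\leq u_1,\,U_2\leq u_2)$ with uniform $U_j$, so monotonicity of the measure and the union bound suffice; the only caveat worth recording is that a reader starting from the purely axiomatic definition (grounded, uniform margins, $2$-increasing) would first need the standard fact that every copula is itself the joint CDF of a pair of uniform random variables before your probabilistic representation applies --- a gap you correctly anticipated and addressed in your closing remarks. What your version buys is brevity and transparency; what the analytic version buys is independence from any probabilistic representation, which matters only in more abstract settings than the one this paper works in.
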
 
The upper and lower %Fr\'echet-Hoeffding 
bounds model extreme dependence structures. If $C=C^{-}$ the pair of RVs are said to be countermonotonic, whereas $C=C^{+}$ means that both RVs are comonotonic. 
The Copula function for the independent case $C^{\perp}(u_1,u_2)=u_1 \cdot u_2$ defines the limit between positive and negative dependence. Let us assume two Copulas that verify: 
$C^{-} \prec C_1 \prec C^{\perp} \prec C_2 \prec C^{+}$. Then, $C_1$ models a negative dependence and $C_2$ a positive dependence. 
\vspace{-3mm}
\subsection{General dependence}
Since the considered fading channels are correlated, the distribution of the SNR $\gamma$ is that of the product of two correlated RVs. For this purpose, we exploit the following theorem to determine the PDF of $f_\gamma(\gamma)$ \cite{ly2019determining}.
\begin{theorem}\label{thm-product}
	Let  $\textbf{X}=(X_1,X_2)$ be a vector of two absolutely continuous RVs with the corresponding Copula $C$ and CDFs $F_{X_j}(x_j)$ for $j\in\{1,2\}$. Thus, the PDF of $Y=X_1X_2$ is:
	\begin{align}%\nonumber
	f_Y(y)=\int_{0}^{\infty}c\left(u,F_{X_2}\left(\tfrac{y}{F_{X_1}^{-1}(u)}\right)\right)
	\tfrac{f_{X_2}\left(\tfrac{y}{F_{X_1}^{-1}(u)}\right)}{|F_{X_1}^{-1}(u)|} du,\label{thm-prod}
	\end{align}
	where $F^{-1}_{X_1}(.)$ is an inverse function of $F_{X_1}(.)$ and $c(.)$ denotes the density of Copula $C$.
\end{theorem}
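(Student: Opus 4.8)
The plan is to obtain the product density from first principles and then inject the dependence structure through the Copula representation of the joint PDF supplied by Corollary \ref{col-joint}. First I would write the CDF of $Y=X_1X_2$ as the integral of the joint density over the region $\{(x_1,x_2):x_1x_2\leq y\}$. Since the fading powers $X_1,X_2$ are non-negative, this region reduces to $x_2\leq y/x_1$ for $x_1>0$, so that
\begin{align}
F_Y(y)=\int_{0}^{\infty}\int_{0}^{y/x_1} f_{X_1,X_2}(x_1,x_2)\,dx_2\,dx_1. \nonumber
\end{align}
Differentiating with respect to $y$ by the Leibniz integral rule collapses the inner integral (the derivative of the upper limit $y/x_1$ contributing the factor $1/x_1$) and yields the classical product-density expression
\begin{align}
f_Y(y)=\int_{0}^{\infty} f_{X_1,X_2}\!\left(x_1,\tfrac{y}{x_1}\right)\tfrac{1}{|x_1|}\,dx_1. \nonumber
\end{align}

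Next I would replace the joint density using Corollary \ref{col-joint}, namely $f_{X_1,X_2}(x_1,x_2)=f_{X_1}(x_1)f_{X_2}(x_2)\,c\big(F_{X_1}(x_1),F_{X_2}(x_2)\big)$, and set $x_2=y/x_1$. The resulting integrand carries an explicit factor $f_{X_1}(x_1)$, which motivates the substitution $u=F_{X_1}(x_1)$, with $du=f_{X_1}(x_1)\,dx_1$ and $x_1=F_{X_1}^{-1}(u)$. This change of variable is admissible because $X_1$ is absolutely continuous, so $F_{X_1}$ is strictly increasing on its support and possesses a well-defined inverse. It absorbs $f_{X_1}(x_1)\,dx_1$ into $du$ and converts every remaining occurrence of $x_1$ into $F_{X_1}^{-1}(u)$, producing precisely the claimed integrand $c\big(u,F_{X_2}(y/F_{X_1}^{-1}(u))\big)\,f_{X_2}(y/F_{X_1}^{-1}(u))\big/|F_{X_1}^{-1}(u)|$, which establishes \eqref{thm-prod}.

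The step requiring the most care is the bookkeeping of the integration domain under the substitution together with the justification of differentiating under the integral sign. As $x_1$ sweeps the support of $X_1$, the new variable $u=F_{X_1}(x_1)$ ranges over the image of $F_{X_1}$, and it is this range that fixes the limits of the $u$-integral; the interchange of $\partial_y$ and $\int$ would be justified by dominated convergence under the mild regularity implied by absolute continuity. A final, minor remark is that non-negativity of the fading powers makes $F_{X_1}^{-1}(u)>0$, so the absolute value $|F_{X_1}^{-1}(u)|$ is redundant here; the formula then specializes directly to the SNR density of $\gamma=\hat{\gamma}\,g_fg_b$ in \eqref{eqsys} upon identifying the product $X_1X_2$ with $g_fg_b$.
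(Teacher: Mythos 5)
The paper does not actually prove Theorem \ref{thm-product}; its ``proof'' is a pointer to \cite{ly2019determining}. Your derivation reconstructs precisely the argument behind that cited result, and it is correct: writing $F_Y(y)$ as the integral of the joint density over $\{(x_1,x_2): x_1x_2\le y\}$, differentiating under the integral (Leibniz) to obtain the classical product formula $f_Y(y)=\int f_{X_1,X_2}\big(x_1,\tfrac{y}{x_1}\big)\tfrac{1}{|x_1|}\,dx_1$, injecting the copula factorization of Corollary \ref{col-joint}, and substituting $u=F_{X_1}(x_1)$ are exactly the right steps, and your attention to the interchange of $\partial_y$ and $\int$ is appropriate. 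Your proof in fact supplies content the paper omits, and it even exposes a transcription slip in the statement: your change of variables sends $x_1\in(0,\infty)$ to $u=F_{X_1}(x_1)\in(0,1)$, so the $u$-integral must run over the unit interval, whereas \eqref{thm-prod} as printed reads $\int_0^\infty$ (in the source reference the integral is over $[0,1]$; for $u>1$ the quantile $F_{X_1}^{-1}(u)$ is not even defined). You gesture at this when you say the image of $F_{X_1}$ ``fixes the limits,'' but you should state the limits $0$ to $1$ explicitly rather than leave them implicit.

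Two minor caveats, neither fatal. First, your proof covers only non-negative $X_1,X_2$: the theorem as stated, with the absolute value $|F_{X_1}^{-1}(u)|$, is meant for general absolutely continuous RVs, where the region $\{x_1x_2\le y\}$ splits by the sign of $x_1$ and the outer integral runs over the whole real line. For the paper's application ($X_1,X_2$ are fading powers $g_f,g_b\ge 0$) your restriction is harmless, and your observation that the absolute value is then redundant is correct; just flag that you are proving the special case actually used. Second, absolute continuity of $X_1$ does not by itself make $F_{X_1}$ strictly increasing on its support, so the clean inverse $F_{X_1}^{-1}$ you invoke requires either a density positive a.e.\ on an interval (true for the Gamma marginals used throughout the paper) or interpreting $F_{X_1}^{-1}$ as the quantile function, under which the substitution is valid up to null sets.
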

\begin{proof}
	See \cite{ly2019determining}.
\end{proof}
\begin{corollary}\label{col-pdf}
	The PDF of $\gamma$  in the general dependence case of fading channels is given by
\begin{align}
f_\gamma(\gamma)=&\int_{0}^{\infty}c\left(u_1,F_{G}\left(\frac{\gamma}{\hat{\gamma}g_f}\right)\right)%\\
%\times
\tfrac{f_{G}(g_f)}{\hat{\gamma}g_f} f_{G}\left(\frac{\gamma}{\hat{\gamma}g_f}\right)dg_f\label{gen-snr}
\end{align}
where $u_1=F_{G}\left(g_f\right)$.
\end{corollary}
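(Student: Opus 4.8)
The plan is to recognize the instantaneous SNR in \eqref{eqsys} as a (scaled) product of two dependent random variables and then invoke Theorem~\ref{thm-product} directly. Concretely, I would set $X_1 = \hat{\gamma} g_f$ and $X_2 = g_b$, so that $Y = X_1 X_2 = \hat{\gamma} g_f g_b = \gamma$ is exactly the quantity whose density is sought. The structural facts needed are already in place: the two links share the common normalized marginal $F_G$ with density $f_G$, and their dependence is captured by the Copula $C$ with density $c$ supplied by Sklar's theorem (Theorem~\ref{thm-sklar}) together with Corollary~\ref{col-joint}.

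First I would note that multiplying $g_f$ by the positive constant $\hat{\gamma}$ is a strictly increasing marginal transformation; since Copulas are invariant under such transformations, the pair $(X_1, X_2) = (\hat{\gamma} g_f, g_b)$ retains the same Copula $C$ as $(g_f, g_b)$. Next I would compute the marginal of $X_1$: from $F_{X_1}(x) = \Pr(\hat{\gamma} g_f \le x) = F_G(x/\hat{\gamma})$ it follows that $F_{X_1}^{-1}(u) = \hat{\gamma} F_G^{-1}(u)$, while $X_2$ keeps $F_{X_2} = F_G$ and $f_{X_2} = f_G$. Substituting these three ingredients into \eqref{thm-prod} yields $f_\gamma(\gamma)$ as an integral over the Copula argument $u$.

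The final step is a change of variables that re-expresses this integral over the physical fading variable $g_f$. Setting $u_1 = F_G(g_f)$, equivalently $g_f = F_G^{-1}(u_1)$, gives $F_{X_1}^{-1}(u_1) = \hat{\gamma} g_f$, so the inner argument becomes $\gamma/F_{X_1}^{-1}(u_1) = \gamma/(\hat{\gamma} g_f)$, and the differential transforms as $du_1 = f_G(g_f)\,dg_f$; because $g_f \ge 0$ the factor $|F_{X_1}^{-1}(u_1)|$ reduces to $\hat{\gamma} g_f$. The Jacobian $f_G(g_f)$ thus produces the extra $f_G(g_f)$ appearing in \eqref{gen-snr}, the integration limits map from $(0,1)$ onto $(0,\infty)$, and collecting the remaining terms reproduces the claimed density exactly.

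I do not anticipate a genuine obstacle, since the statement is essentially a specialization of Theorem~\ref{thm-product}; the only points requiring care are the bookkeeping of the $\hat{\gamma}$ scaling and the use of a common marginal for both links. As an equivalent and perhaps cleaner alternative, one could first apply Theorem~\ref{thm-product} to obtain the density of the unscaled product $g_f g_b$ and then use the elementary rescaling $f_\gamma(\gamma) = \hat{\gamma}^{-1} f_{g_f g_b}(\gamma/\hat{\gamma})$, which isolates the scaling into a trivial last step. The two checks I would perform most carefully are the invariance of the Copula under the positive scaling and the verification that the change of variables correctly converts the $du$ integral into the $dg_f$ integral with the accompanying $f_G(g_f)$ factor.
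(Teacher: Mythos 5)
Your proposal is correct and essentially coincides with the paper's own proof: the authors likewise specialize Theorem~\ref{thm-product} by taking $y=g_fg_b$ with the substitution $u=F_G(g_f)$ (your Jacobian step) and then apply the rescaling $f_\gamma(\gamma)=\hat{\gamma}^{-1}f_Y(\gamma/\hat{\gamma})$, which is exactly the ``cleaner alternative'' you describe at the end. Your primary route, absorbing $\hat{\gamma}$ into $X_1$ via copula invariance under strictly increasing marginal transformations before invoking the theorem, is a trivially equivalent variant, and you correctly read the $u$-integration in \eqref{thm-prod} as running over $(0,1)$ before the change of variables maps it onto $(0,\infty)$.
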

\begin{proof}
Let $y=g_fg_b$ and $u=F_{G}(g_f)$ in Theorem \ref{thm-product}, and using the fact that $f_{\gamma}(\gamma)=\frac{1}{\hat{\gamma}}f_{Y}\big(\frac{\gamma}{\hat{\gamma}}\big)$ the proof is completed.
\end{proof}
Corollary \ref{col-pdf} holds for any arbitrary choice of fading distributions as well as Copula functions. For exemplary purposes, we will assume that the forward and backscatter fading channel coefficients (i.e., $h_f$ and $h_b$) follow the Nakagami-$m$ distribution \cite{zhang2018backscatter}. %, where the parameter $m\geq 1/2$ captures fading severity. 
Hence, the corresponding fading power channel coefficients $g_i$ for $i\in\{f,b\}$ are dependent Gamma RVs. 
% [REMOVED]
% Since copula functions model arbitrary dependence structures, the linear part of such a general dependence can be expressed in terms of the Copula. The next corollary presents this expression in the general case. 
%
\begin{corollary}
\label{cor:lin-corr}
The Pearson's correlation coefficient, $\rho$, of a given pair of correlated Gamma RVs can be expressed in terms of the Copula function that models their dependence as:
\begin{align}
\label{eq:lin-corr}
\rho = m  \int \limits_{g_f \geq 0} \int \limits_{g_b \geq 0}
	\Big(C(u_1, u_2) -u_1 \cdot u_2 \Big) dg_f dg_b
\end{align}
where $u_1=F_G(g_f)$ and $u_2=F_G(g_b)$.
\begin{proof}
The proof comes after expressing the covariance of two positive RVs in terms of their joint CDF and adding the expression of the  variance of Nakagami-$m$ fading. 
\end{proof}
\end{corollary}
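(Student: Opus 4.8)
The plan is to evaluate $\rho$ straight from its definition $\rho = \operatorname{Cov}(g_f,g_b)/\sqrt{\operatorname{Var}(g_f)\operatorname{Var}(g_b)}$, handling the numerator and denominator separately. The covariance in the numerator is where the Copula enters, while the denominator merely contributes the scalar prefactor $m$ appearing in \eqref{eq:lin-corr}; the whole argument reduces to a short computation once the right covariance identity is in hand.

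For the numerator, the first step is to invoke Hoeffding's covariance formula, which for RVs with finite second moments reads
\begin{align}
\operatorname{Cov}(g_f,g_b) = \int_{0}^{\infty}\!\!\int_{0}^{\infty}\big(F_{g_f,g_b}(x,y)-F_{g_f}(x)F_{g_b}(y)\big)\,dx\,dy, \nonumber
\end{align}
where the domain is already restricted to the first quadrant because $g_f,g_b$ are non-negative. I would then apply Sklar's theorem (Theorem~\ref{thm-sklar}) to substitute $F_{g_f,g_b}(x,y)=C(u_1,u_2)$ with $u_1=F_G(g_f)$ and $u_2=F_G(g_b)$, while the product of the marginals becomes simply $u_1 u_2$. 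Renaming the dummy variables $x\to g_f$ and $y\to g_b$ then produces exactly the integrand $C(u_1,u_2)-u_1 u_2$ of \eqref{eq:lin-corr}.

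For the denominator, I would use that a unit-mean Nakagami-$m$ power coefficient is Gamma distributed with shape $m$ and scale $1/m$, so that $\operatorname{Var}(g_f)=\operatorname{Var}(g_b)=1/m$; hence $\sqrt{\operatorname{Var}(g_f)\operatorname{Var}(g_b)}=1/m$. Dividing the covariance obtained above by $1/m$ introduces the prefactor $m$, which completes the derivation.

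The main obstacle is justifying Hoeffding's identity rather than merely quoting it, since everything else is bookkeeping. The cleanest self-contained route is to introduce an independent copy $(g_f',g_b')$ of $(g_f,g_b)$, write $2\operatorname{Cov}(g_f,g_b)=\E[(g_f-g_f')(g_b-g_b')]$, represent each factor through the layer-cake identity $a-b=\int(\mathbf{1}\{b\le t\}-\mathbf{1}\{a\le t\})\,dt$, and then swap expectation and integration by Fubini's theorem. The delicate points are verifying the absolute integrability that licenses Fubini (guaranteed here by the finite second moments of Gamma RVs) and checking that the resulting expectation of indicator products collapses to $2(F_{g_f,g_b}-F_{g_f}F_{g_b})$ upon exploiting the independence of the copy.
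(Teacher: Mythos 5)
Your proof is correct and takes essentially the same route as the paper's one-line sketch: Hoeffding's covariance identity (the paper's ``covariance of two positive RVs in terms of their joint CDF''), Sklar's theorem to rewrite the joint CDF as $C(u_1,u_2)$, and the variance $1/m$ of the unit-mean Gamma marginals supplying the prefactor $m$. Your extra self-contained justification of Hoeffding's identity via an independent copy and Fubini is sound bookkeeping beyond what the paper spells out, but it is the same argument.
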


In view of \eqref{eq:lin-corr}, we note that Fr\'echet-Hoeffding bounds lead to the bounds of Pearson's correlation. While Copulas allow to model dependency structures other than linear, \eqref{eq:lin-corr} captures the linear part of the dependency between the RVs. 
The following lemma represents the bounds of linear correlation that can be achieved for Gamma marginals. 

\begin{lemma}
\label{lem:RhoBounds}
The upper and lower bounds of linear correlation for a pair of correlated RVs, with Gamma (Nakagami-m fading) marginals are expressed as follows:
\begin{align}
\label{eq:RhoBounds}
\rho^{+} &= 2 
- \tfrac{\Gamma \left( 2(m+1) \right)  {}_2F_1\left(1; -(1+m); 1+m; -1 \right)}{\Gamma^2(m) \cdot m^2 \cdot 2^{2 m}}\\
% LONG VERSION:
%\rho^{-} &= m \int_{g_f \geq 0} \chi  \left( g_f, F_G^{-1} \left( \overline{F}_G(g_f)\right) \right) \mathrm{d}g_f \nonumber \\ 
%		 &- m \int_{g_f \geq 0} \left( \kappa  \left( g_f, F_G^{-1} \left( \overline{F}_G(g_f)\right) \right) 
%		 - \kappa \left( g_f, 0\right) \right) \mathrm{d}g_f
% SHORT VERSION:
\rho^{-} &= m \int_{g_f \geq 0} \Big( \chi  \left( g_f, \omega \right) -  \kappa  \left( g_f, \omega \right) 
		 + \kappa \left( g_f, 0\right)  \Big) dg_f 
\end{align}
where $\omega=F_G^{-1} \left( \overline{F}_G(g_f)\right)$, $\overline{F}_G(g)$ is the complementary CDF of $G$, ${}_2F_1(\cdot)$ is Gauss hypergeometric function, $\Gamma(\cdot)$ is the Gamma function and $\chi(g_f,g_b)$ and $\kappa(g_f,g_b)$ are:
\begin{align} 
\chi(g_f, g_b) &= \tfrac{\left(\Gamma(m+1, g_b m)-g_b m \Gamma(m, g_b m)\right)}
	{m \Gamma(m)^{2} (\Gamma(m)-\gamma(m, g_f m))^{-1}} \nonumber \\
\kappa(g_f, g_b) &= \tfrac{\left(g_b m \Gamma(m)-g_b m \Gamma(m, g_b m)+\Gamma(m+1, g_b m)\right)} 
	{m \Gamma(m)^{2} \left(\gamma(m, g_f m) \right)^{-1}} \nonumber 
\end{align}
\begin{proof}
The proof is given in Appendix \ref{proof:RhoBounds}. 
\end{proof}
\end{lemma}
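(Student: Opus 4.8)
The plan is to derive both bounds by inserting the extremal copulas of the Fr\'echet--Hoeffding inequality (Theorem \ref{theo:Copula bounds}) into the integral representation of $\rho$ established in Corollary \ref{cor:lin-corr}, and then evaluating the two resulting integrals for Gamma marginals. First I would observe that in $\rho = m \int_0^\infty\!\int_0^\infty \big(C(u_1,u_2) - u_1 u_2\big)\,dg_f\,dg_b$ the copula enters only through $C(u_1,u_2)$, and since $m>0$ the integrand is pointwise nondecreasing in $C$. Because $C^{-} \prec C \prec C^{+}$ holds for every admissible copula, $\rho$ is therefore maximized by $C=C^{+}$ and minimized by $C=C^{-}$. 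This identifies $\rho^{+}$ and $\rho^{-}$ as the integrals obtained by substituting $C^{+}(u_1,u_2)=\min(u_1,u_2)$ and $C^{-}(u_1,u_2)=\max(u_1+u_2-1,0)$, respectively; what remains is to evaluate these integrals using the Nakagami-$m$ marginal, whose CDF is the regularized lower incomplete Gamma function $F_G(g)=\gamma(m,mg)/\Gamma(m)$.

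For the upper bound, since $F_G$ is strictly increasing, $\min(F_G(g_f),F_G(g_b))=F_G(\min(g_f,g_b))$; splitting the plane along $g_f=g_b$ and exploiting symmetry reduces $\rho^{+}$ to a single iterated integral of incomplete-Gamma products. I would evaluate the inner integral by parts together with standard Gamma-function identities, and collapse the remaining integral onto the Gauss hypergeometric form ${}_2F_1(1;-(1+m);1+m;-1)$, which yields the stated expression. A useful sanity check here is that comonotonic RVs with identical marginals coincide almost surely, so the covariance must equal the variance $1/m$ of the Nakagami-$m$ power.

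For the lower bound I would note that $\max(u_1+u_2-1,0)$ is supported only where $F_G(g_f)+F_G(g_b)\geq 1$, i.e. $g_b\geq \omega=F_G^{-1}(\overline{F}_G(g_f))$. I would therefore split the inner $g_b$-integral at $g_b=\omega$: on $[0,\omega)$ the integrand equals $-u_1 u_2$, and on $[\omega,\infty)$ it equals $u_1+u_2-1-u_1 u_2$. Carrying out each piece in closed form produces precisely the auxiliary functions $\kappa$ and $\chi$ (the combinations of $\Gamma(\cdot)$ with the upper and lower incomplete Gamma functions evaluated at $g_b m$), after which only the outer integral over $g_f$ survives, giving the stated form for $\rho^{-}$.

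The main obstacle is the lower-bound evaluation. Unlike the comonotonic case, countermonotonic RVs with identical marginals are \emph{not} linked by a simple deterministic identity, so there is no shortcut collapsing the double integral, and the inner integration must be performed explicitly while carefully tracking the moving boundary $\omega=F_G^{-1}(\overline{F}_G(g_f))$. The bookkeeping of the incomplete-Gamma terms across the split domain is the delicate part, and the residual one-dimensional integral over $g_f$ does not admit further elementary simplification, which is exactly why $\rho^{-}$ is left in integral form.
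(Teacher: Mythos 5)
Your proposal is correct and follows essentially the same route as the paper's Appendix~\ref{proof:RhoBounds}: substitute the Fr\'echet--Hoeffding extremal copulas into \eqref{eq:lin-corr}, reduce $\min(u_1,u_2)-u_1u_2$ to a single iterated integral by splitting along the diagonal $g_f=g_b$ (the paper does this via indicator functions and the monotonicity of $F_G$) and, for $\rho^{-}$, split the inner $g_b$-integral at the moving boundary $\omega=F_G^{-1}\left(\overline{F}_G(g_f)\right)$ exactly as the paper does, before evaluating the incomplete-Gamma integrals. Your two additions --- the explicit observation that $\rho$ is pointwise nondecreasing in $C$, which justifies that the extremal copulas yield the extremal correlations, and the comonotonicity sanity check that the covariance must equal the variance $1/m$ --- are left implicit in the paper and strengthen the argument rather than diverging from it.
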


We will now exemplify how the key performance metrics of interest can be characterized in closed-form for the specific case of the FGM Copula, which is defined below. This choice is justified because it allows to capture \emph{both} positive and negative dependencies between the RVs while offering good mathematical tractability. %, at the expense of a certain inability to model scenarios with strong dependencies. 
As we will later see, the use of FGM Copula will be enough for our purposes of determining the effect of negative dependence between BC links.
\begin{definition}\label{def-fgm}[FGM Copula] The bivariate FGM Copula with dependence parameter $\theta\in[-1,1]$ is defined as:
	\begin{align}\label{fgm}
	C_F(u_1,u_2)=u_1u_2(1+\theta(1-u_1)(1-u_2)),
	\end{align}
	where $\theta\in[-1,0)$ and $\theta\in(0,1]$ denote the negative and positive dependence structures respectively, while $\theta=0$ \textit{always} indicates the independence structure.
\end{definition}
The permissive range of the linear correlation derived from the FGM Copula can be obtained by evaluating \eqref{eq:lin-corr} as 
$\rho \in [\rho(\theta=-1), \rho(\theta=1)] \subset [-1,1]$.
\begin{theorem}\label{thm-pdf}
	The PDF of $\gamma$ under correlated Nakagami-$m$ fading channels using the FGM Copula is given by
		\begin{align}\nonumber
	&f_\gamma^{\rm FGM}(\gamma)=\mathcal{B}\bigg(\gamma^{m-1}K_{0}\left(\zeta\sqrt{\gamma}\right)+\theta\bigg[\gamma^{m-1}K_{0}\left(\zeta\sqrt{\gamma}\right)\\\nonumber
	&-\sum_{k=0}^{m-1}\frac{2^{2-\frac{k}{2}}m^k}{\hat{\gamma}^{\frac{k}{2}}k!} \gamma^{\frac{k}{2}+m-1}
	K_{k}\left(\zeta\sqrt{2\gamma}\right)\\
	&+\sum_{k=0}^{m-1}\sum_{n=0}^{m-1}\frac{4m^{k+n}}{\hat{\gamma}^{\frac{k+n}{2}}k!n!}\gamma^{\frac{k+n}{2}+m-1}K_{n-k}\left(2\zeta\sqrt{\gamma}\right)\bigg]\bigg),\label{col-pdf-sim}
	\end{align}
	where $\mathcal{B}=\frac{2{m}^{2m}}{\hat{\gamma}^m\Gamma(m)^2}$, $\zeta=\frac{2m}{\sqrt{\hat{\gamma}}}$ and $K_v(.)$ is the modified Bessel function of the second kind and order $v$.
\end{theorem}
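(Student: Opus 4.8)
The plan is to specialize the general product‑PDF expression \eqref{gen-snr} to the FGM copula of \eqref{fgm} and to Gamma (Nakagami‑$m$) marginals. First I would compute the FGM copula density by differentiating \eqref{fgm} twice, obtaining the compact form $c_F(u_1,u_2)=1+\theta(1-2u_1)(1-2u_2)$. Substituting this into \eqref{gen-snr} splits the integral additively into an independent contribution (from the constant $1$) and a $\theta$‑weighted correction (from the $(1-2u_1)(1-2u_2)$ factor). The independent piece is exactly the product‑of‑Gammas PDF and will reproduce the leading $\mathcal{B}\gamma^{m-1}K_{0}(\zeta\sqrt{\gamma})$ term.

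Next I would insert the normalized Gamma PDF $f_G(g)=\tfrac{m^m}{\Gamma(m)}g^{m-1}e^{-mg}$ together with the integer‑$m$ finite‑sum form of the complementary CDF, $1-F_G(g)=e^{-mg}\sum_{k=0}^{m-1}\tfrac{(mg)^k}{k!}$, so that $1-2F_G(g)=2e^{-mg}\sum_{k=0}^{m-1}\tfrac{(mg)^k}{k!}-1$. The restriction to integer $m$ is exactly what produces the finite sums appearing in \eqref{col-pdf-sim}. Writing both factors $1-2F_G(g_f)$ and $1-2F_G(\tfrac{\gamma}{\hat{\gamma}g_f})$ in this way and expanding the product yields four terms: a constant, two single sums (one carrying an extra $e^{-mg_f}$, one an extra $e^{-m\gamma/(\hat{\gamma}g_f)}$), and a double sum carrying both.

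Each resulting integral over $g_f$ reduces, after collecting the powers of $g_f$ and the two exponentials $e^{-\alpha g_f}$ and $e^{-\beta/g_f}$, to the standard Bessel‑type identity $\int_0^\infty x^{\nu-1}e^{-\beta/x-\alpha x}\,dx=2(\beta/\alpha)^{\nu/2}K_\nu(2\sqrt{\alpha\beta})$. The constant term gives $K_0$ with argument $\zeta\sqrt{\gamma}$; the two single sums, which double the exponential weight on one side, give $K_k$ with argument $\zeta\sqrt{2\gamma}$; and the double sum, which doubles the weight on both sides, gives $K_{n-k}$ (using $K_{-v}=K_v$) with argument $2\zeta\sqrt{\gamma}$. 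A key simplification is that, after renaming the summation index, the two single‑sum integrals are identical, so they merge and account for the factor $2^{2-k/2}$; factoring out $\mathcal{B}=\tfrac{2m^{2m}}{\hat{\gamma}^m\Gamma(m)^2}$ throughout then yields \eqref{col-pdf-sim}.

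The main obstacle I anticipate is purely bookkeeping: correctly tracking the powers of $\gamma$, $\hat{\gamma}$ and $m$, together with the powers of two, across the four contributions, and matching the half‑integer exponents so that each Bessel argument ($\zeta\sqrt{\gamma}$, $\zeta\sqrt{2\gamma}$, $2\zeta\sqrt{\gamma}$) and each order ($0$, $k$, $n-k$) comes out exactly as claimed. Verifying that the two cross terms coincide under $k\leftrightarrow n$ (and $K_{-v}=K_v$) is the one place where the symmetry must be invoked explicitly; everything else is a direct application of the integral identity above.
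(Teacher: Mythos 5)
Your proposal is correct and follows essentially the same route as the paper's Appendix~\ref{app-pdf}: both differentiate the FGM Copula to obtain $c(u_1,u_2)=1+\theta(2u_1-1)(2u_2-1)$, split the product-PDF integral into four additive pieces, expand the integer-$m$ Gamma CDFs as finite exponential sums, evaluate each piece via $\int_0^\infty x^{\nu-1}e^{-\alpha x-\beta/x}\,dx=2\left(\beta/\alpha\right)^{\nu/2}K_\nu\left(2\sqrt{\alpha\beta}\right)$, and finish with the scaling $f_\gamma(\gamma)=\tfrac{1}{\hat\gamma}f_Y\big(\tfrac{\gamma}{\hat\gamma}\big)$. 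Your grouping through the complementary CDF, $1-2F_G(g)=2\overline{F}_G(g)-1$, is merely a bookkeeping variant of the paper's $I_1-2I_2-2I_3+4I_4$ decomposition, and your merging of the two identical single sums (equal under $g_f\mapsto y/g_f$) together with $K_{-v}=K_v$ reproduces the stated coefficients and Bessel orders exactly.
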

\begin{proof}
	The details of proof are in Appendix \ref{app-pdf}.
	\end{proof}

\vspace{-2mm}
\subsection{Linear dependence}
\label{seclin}
In the case of linear dependence, the PDF of $\gamma$ under correlated Nakagami-$m$ fading can be derived from Kibble's bivariate Gamma distribution \cite{AlouiniBook}, yielding
\begin{align}
f_{\gamma}^{\rm lin}(\gamma)=\mathcal{D}\gamma ^{\frac{m-1}{2}}I_{m-1}\left(\mathcal{A} \sqrt{\rho}\right) K_0\left(\mathcal{A}\right),\label{linear-pdf}
\end{align}
where $\overline{\gamma}=\mathbb{E}[\gamma]=\hat\gamma \left(\frac{m+\rho}{m}\right)$ is the average SNR at the receiver side, $\rho=\frac{\text{cov}(g_fg_b)}{\sqrt{\text{var}[g_f]\text{var}[g_b]}}$, is the power correlation coefficient, $\mathcal{D}=\frac{2 m^{m+1} \left(\frac{m \overline{\gamma}}{m+\rho }\right)^{-\frac{m+1}{2}}}{(1-\rho ) \rho ^{\frac{m-1}{2}} \Gamma (m)}$, %\footnote{There is a closed-form univocal relationship between $\rho$ and the power correlation coefficient of the gamma RVs as in \cite{Iskander2006}.},  
$\mathcal{A} = \tfrac{2 m \sqrt{\gamma }}{(1-\rho ) \sqrt{\frac{m \overline{\gamma}}{m+\rho }}}$ and $I_v(.)$ is the modified Bessel function of the first kind and order $v$. We note that \eqref{linear-pdf} does not support negative correlation values.
\section{Average Capacity}
In this section, we compute the average capacity (per bandwidth unit) for the system model under consideration, as:
\begin{align}
\overline{\mathcal{C}}\left[\text{bps/Hz}\right]\overset{\Delta}{=}\int_{0}^{\infty}\log_2\left(1+\gamma\right)f_\gamma\left(\gamma\right)d\gamma.\label{capacity}
\end{align}
\subsection{General dependence}
In the most general case, i.e., for an arbitrary choice of Copula and fading distributions, the average capacity of a BC system can be computed by plugging \eqref{gen-snr} into \eqref{capacity}.
We will now pay attention to the specific examples considered in the previous section, i.e., when using the FGM Copula and under linear correlation.
	\begin{theorem}
	The average capacity under correlated Nakagami-$m$ fading channels using the FGM Copula can be evaluated as
	\begin{align}\nonumber
	&\overline{\mathcal{C}}=\mathcal{B}\int_{0}^{\infty}\bigg(\gamma^{m-1}K_{0}\left(\zeta\sqrt{\gamma}\right)+\theta\bigg[\gamma^{m-1}K_{0}\left(\zeta\sqrt{\gamma}\right)\\\nonumber
	&-\sum_{k=0}^{m-1}\frac{2^{2-\frac{k}{2}}m^k}{\hat{\gamma}^{\frac{k}{2}}k!} \gamma^{\frac{k}{2}+m-1}
	K_{k}\left(\zeta\sqrt{2\gamma}\right)\\
	&+\sum_{k=0}^{m-1}\sum_{n=0}^{m-1}\frac{4m^{k+n}}{\hat{\gamma}^{\frac{k+n}{2}}k!n!}\gamma^{\frac{k+n}{2}+m-1}K_{n-k}\left(2\zeta\sqrt{\gamma}\right)\bigg]\bigg)d\gamma.\label{capacity-copula}
	\end{align}
	\end{theorem}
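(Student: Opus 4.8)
The plan is to obtain \eqref{capacity-copula} by direct substitution, since the statement simply composes the closed-form SNR density of Theorem~\ref{thm-pdf} with the definition of average capacity in \eqref{capacity}. First I would start from $\overline{\mathcal{C}}=\int_{0}^{\infty}\log_2\left(1+\gamma\right)f_\gamma^{\rm FGM}\left(\gamma\right)d\gamma$ and insert the expression for $f_\gamma^{\rm FGM}(\gamma)$ given in \eqref{col-pdf-sim}. Because the constant $\mathcal{B}=\frac{2m^{2m}}{\hat{\gamma}^m\Gamma(m)^2}$ does not depend on $\gamma$, it factors outside the integral, and the $\log_2(1+\gamma)$ weight carried by \eqref{capacity} multiplies the entire bracketed integrand. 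This immediately reproduces the displayed form, with the $\theta=0$ (independent) contribution corresponding to the leading $\gamma^{m-1}K_0(\zeta\sqrt{\gamma})$ term and the dependence correction collected inside the $\theta$-bracket.

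Next I would keep the integrand intact term by term. The bracket contains three families: (i) the repeated $K_0(\zeta\sqrt{\gamma})$ term, (ii) the single sum of $K_k(\zeta\sqrt{2\gamma})$ terms over $k\in\{0,\dots,m-1\}$, and (iii) the double sum of $K_{n-k}(2\zeta\sqrt{\gamma})$ terms. Since both sums are finite, linearity of the integral allows the summations to be pulled outside the integral without changing the form, provided each resulting term is integrable against $\log_2(1+\gamma)$ on $(0,\infty)$. This integrability is straightforward to confirm: every term decays exponentially through its modified Bessel-$K$ factor at large $\gamma$, while near the origin the polynomial prefactor combined with the mild singularity of $K_\nu$ and the $\log_2(1+\gamma)\sim \gamma/\ln 2$ behavior remains integrable for $m\geq 1$. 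Collecting the terms then yields exactly \eqref{capacity-copula}.

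There is no substantial obstacle here, since the result is essentially a restatement that merges two previously established facts; the only care required is the convergence check just described, which is routine. I would note explicitly that the theorem deliberately leaves the capacity as a single one-dimensional integral rather than a fully elementary closed form, because the building blocks $\int_0^\infty \log_2(1+\gamma)\,\gamma^{a}K_\nu(b\sqrt{\gamma})\,d\gamma$ do not reduce to elementary functions (although each could in principle be expressed through Meijer-$G$ or generalized hypergeometric representations). As numerical evaluation of \eqref{capacity-copula} is immediate, no such reduction is needed and the argument terminates at the substitution step.
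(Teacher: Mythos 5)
Your proposal is correct and coincides with the paper's (implicit) argument: the theorem follows by inserting the FGM density of Theorem~\ref{thm-pdf} into the capacity definition \eqref{capacity}, factoring the constant $\mathcal{B}$ out, which is why the paper states \eqref{capacity-copula} without a separate proof and only remarks that the remaining one-dimensional integral admits a hypergeometric/digamma/gamma closed form or direct numerical evaluation. Your added integrability check is sound (exponential decay of $K_\nu$ at infinity, and at the origin $K_{n-k}(2\zeta\sqrt{\gamma})\sim\gamma^{-|n-k|/2}$ gives exponents $\min(k,n)+m-1\geq 0$ for the integer values of $m$ implied by the finite sums), so nothing is missing.
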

The above integral can be expressed, in closed-form, in terms of the hypergeometric, digamma, and gamma functions, although such expression is not included here for the sake of compactness. Besides, it can be easily evaluated numerically. 

\subsection{Linear dependence}
The average capacity under correlated Nakagami-$m$ fading channels assuming linear correlation can be evaluated by plugging \eqref{linear-pdf} into \eqref{capacity}. This integral can be solved in terms of the Fox-H function following a similar rationale as in \cite{matez2020effect}. However, it can also be easily evaluated numerically. %, for instance using the integral package in MATLAB software. 

\subsection{Asymptotic results}
Even though the exact capacity can be evaluated numerically, we are interested in examining the asymptotic behavior %of the average capacity 
in the high and low SNR regime. We first consider the case with linear correlation, although we will later see that they can be extrapolated to the case of arbitrary dependence.

\begin{theorem}
	In the high SNR regime, the average capacity of a BC system can be approximated as
\begin{align}
\label{asy1}
\overline{\mathcal{C}}(\overline{\gamma})&\approx\log_2(\overline{\gamma})+\log_2(e)\left(2\psi(m)-\log(m(m+\rho))\right),\\
\overline{\mathcal{C}}(\hat{\gamma})&\approx\log_2(\hat{\gamma})-2\log_2(m)+2\psi(m)\log_2(e),
\label{asy2}
\end{align}
for the cases of fixed average receive SNR and fixed transmit power, respectively, with $\psi(\cdot)$ being the digamma function.
\end{theorem}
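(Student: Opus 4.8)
The plan is to bypass the explicit PDF in \eqref{linear-pdf} entirely and exploit the fact that at high SNR the integrand in \eqref{capacity} is dominated by large $\gamma$, where $\log_2(1+\gamma)\approx\log_2(\gamma)$. Under this approximation the average capacity collapses to a single logarithmic moment,
\begin{align}
\overline{\mathcal{C}}\approx\mathbb{E}[\log_2\gamma]=\log_2(e)\,\mathbb{E}[\ln\gamma],\nonumber
\end{align}
so the entire problem reduces to evaluating $\mathbb{E}[\ln\gamma]$, a quantity far more tractable than the full Bessel-function integral.

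Next I would use the multiplicative structure $\gamma=\hat{\gamma}g_fg_b$ from \eqref{eqsys} to split the logarithm additively,
\begin{align}
\mathbb{E}[\ln\gamma]=\ln\hat{\gamma}+\mathbb{E}[\ln g_f]+\mathbb{E}[\ln g_b],\nonumber
\end{align}
which holds by linearity of expectation \emph{regardless of the dependence} between $g_f$ and $g_b$; only the marginals enter. Since each $g_i$ is a normalized ($\mathbb{E}[g_i]=1$) Gamma RV with shape and rate both equal to $m$, I would invoke the standard log-moment $\mathbb{E}[\ln g_i]=\psi(m)-\ln m$. Substituting and converting to base two immediately yields \eqref{asy2}. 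To then obtain \eqref{asy1} for the fixed-average-SNR parametrization, I would re-express $\hat{\gamma}$ through the relation $\overline{\gamma}=\hat{\gamma}\,(m+\rho)/m$ stated below \eqref{linear-pdf}, i.e. $\hat{\gamma}=\overline{\gamma}\,m/(m+\rho)$, and carry it through the same expression; the terms $-\ln m-\ln(m+\rho)$ recombine into $-\ln(m(m+\rho))$, recovering \eqref{asy1} with $\rho$ entering \emph{only} through the SNR normalization.

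The main obstacle is not the algebra but justifying the high-SNR step rigorously: one must show that the correction $\mathbb{E}[\log_2(1+\gamma)]-\mathbb{E}[\log_2\gamma]=\mathbb{E}[\log_2(1+\gamma^{-1})]$ vanishes as $\hat{\gamma}\to\infty$, which amounts to controlling the probability mass of $\gamma$ near the origin (where the product of two Gamma variables produces the $K_0$-type logarithmic behaviour seen in \eqref{linear-pdf}). A secondary but conceptually important point worth stressing is that, because $\mathbb{E}[\ln\gamma]$ depends on the joint law of $(g_f,g_b)$ solely through the marginals, the Copula disappears from the leading-order expression. This is precisely why the result derived for linear (Kibble) correlation extrapolates verbatim to arbitrary dependence, and why fading dependence becomes immaterial in the high-SNR regime.
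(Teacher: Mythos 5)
Your proposal is correct, and it takes a genuinely different route from the paper's proof, essentially running the derivation in reverse order and at a greater level of generality. The paper first establishes \eqref{asy1} via the high-SNR framework of \cite{yilmaz2012unified}, writing $\overline{\mathcal{C}}(\overline{\gamma})\approx\log_2(\overline{\gamma})+\log_2(e)\,\mathcal{M}'(0)$ and computing the normalized moments of the Kibble bivariate Gamma law in closed form, $\mathcal{M}(n)=\Gamma(m+n)^2\,{}_2F_1(-n,-n;m;\rho)/\left((m(m+\rho))^n\Gamma(m)^2\right)$, then differentiating at $n=0$ (citing \cite{Laureano2016}; note the hypergeometric term contributes nothing at $n=0$ since $(-n)_k(-n)_k=O(n^2)$ for $k\geq 1$), and only afterwards obtains \eqref{asy2} through $\overline{\gamma}=\hat{\gamma}(m+\rho)/m$. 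Since $\mathcal{M}'(0)=\mathbb{E}[\ln\gamma]-\ln\overline{\gamma}$, the paper's starting point is exactly your $\overline{\mathcal{C}}\approx\mathbb{E}[\log_2\gamma]$, so both proofs rest on the same first-order approximation; the difference is that you evaluate the log-moment additively through the marginals, $\mathbb{E}[\ln\gamma]=\ln\hat{\gamma}+2(\psi(m)-\ln m)$, proving \eqref{asy2} first and recovering \eqref{asy1} by the substitution $\hat{\gamma}=\overline{\gamma}\,m/(m+\rho)$. Your route buys two real things: it bypasses the Kibble PDF, the moment formula and the hypergeometric differentiation entirely, and---more importantly---it proves \eqref{asy2} for \emph{any} dependence structure with normalized Gamma marginals, thereby supplying the actual mechanism behind the paper's unproved assertion that the linear-correlation asymptotics ``can be extrapolated to the case of arbitrary dependence'': the leading term depends on the joint law only through the marginals. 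What the paper's route buys in exchange is the closed-form $\mathcal{M}(n)$, reusable for other asymptotic analyses, with the error control packaged into the cited reference. Your candid flag that one must still verify $\mathbb{E}[\log_2(1+\gamma^{-1})]\to 0$ as $\hat{\gamma}\to\infty$ is not a gap relative to the paper, which does not verify it either; for the record, the $K_0$-type logarithmic singularity of the product-Gamma density at the origin is integrable, so this correction indeed vanishes for both arguments.
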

\begin{proof}
	The details of proof are in Appendix \ref{app-cap}.
\end{proof}
%

% OPTION: 3 figures side by side
\begin{figure*}[t!]%
    \centering
    \subfloat[]{{\includegraphics[width=0.33\textwidth]{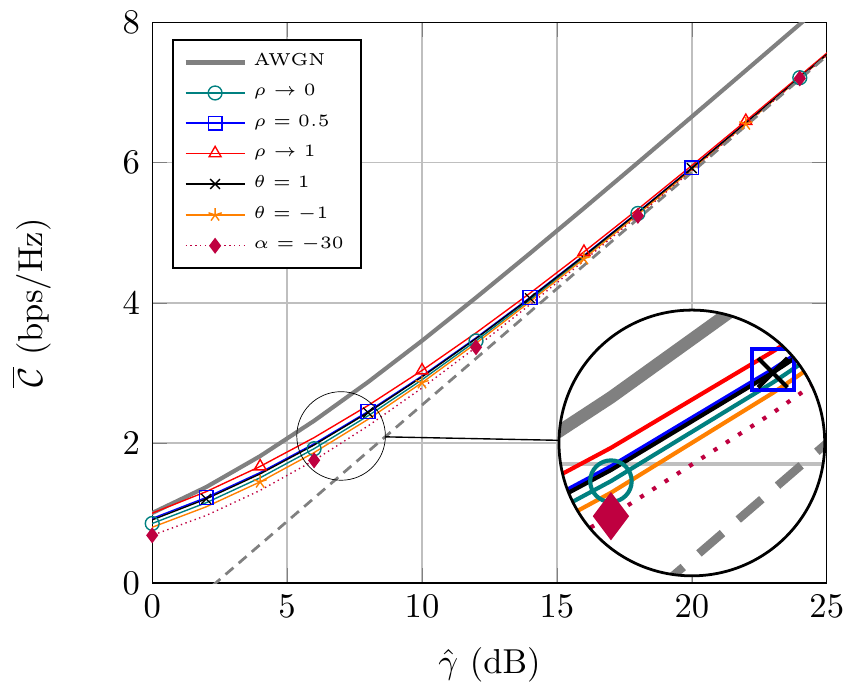} }}%
    \subfloat[]{{\includegraphics[width=0.33\textwidth]{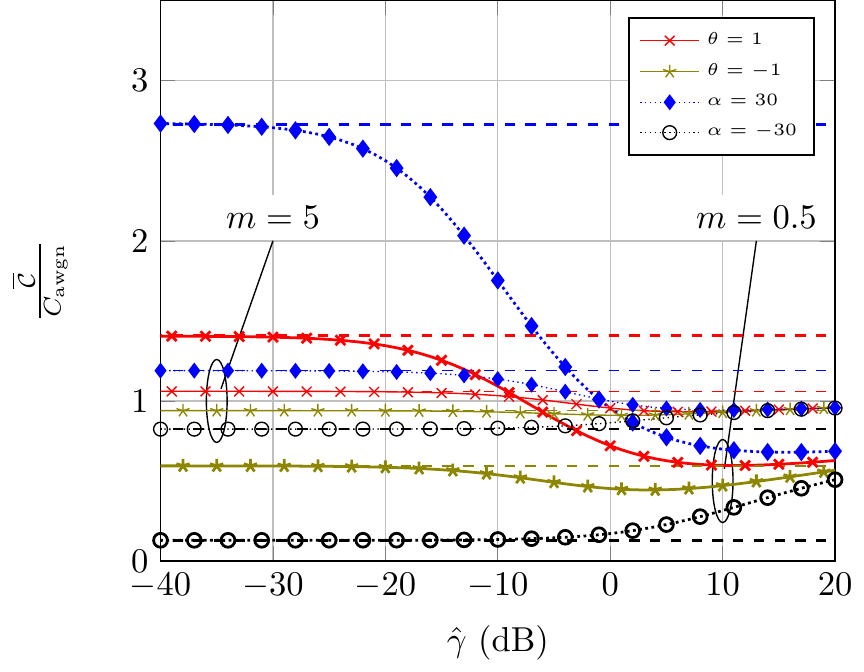} }}%
    \subfloat[]{{\includegraphics[width=0.33\textwidth]{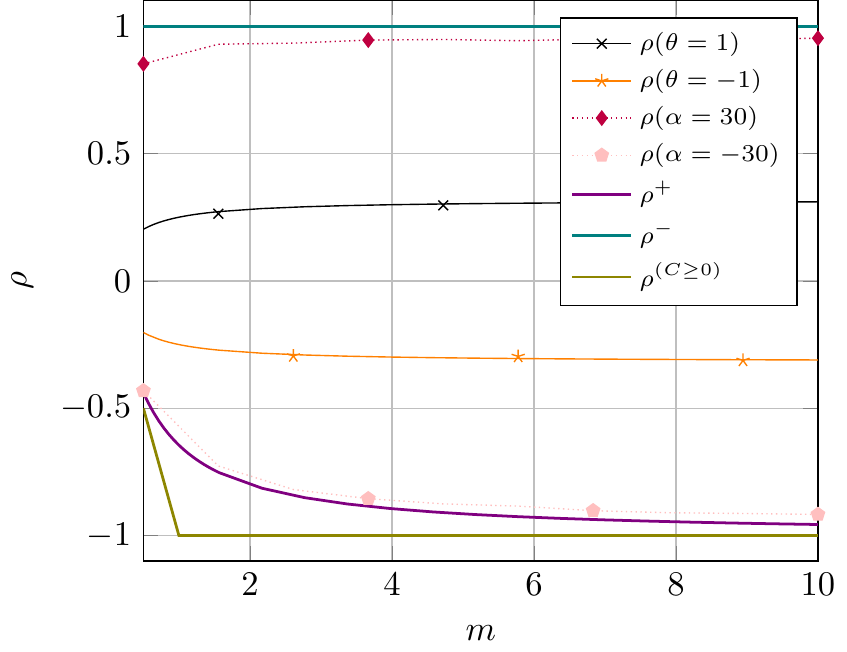} }}%
    \caption{Capacity metrics versus $\hat{\gamma}$ under different scenarios:
    (a) average capacity $\overline{\mathcal{C}}$ versus $\hat{\gamma}$ for $m=2$;
    (b) average capacity $\overline{\mathcal{C}}$ normalized to that of the AWGN case versus $\hat{\gamma}$ for $m=0.5$ (thin lines) and $m=5$ (thick lines); and
    (c) Pearson's correlation coefficient vs fading parameter $m$ for FGM and Frank Copulas and theoretic bounds.
Solid lines indicate theoretical expressions. Markers indicate MC simulations. Dashed gray line indicates the asymptotic expression. 
% The case with no fading (i.e., AWGN channel) is included as a reference.  \eqref{asy2}
For $m=0.5$, $\rho_{\rm FGM}\left(\theta=\pm1\right)\approx \pm 0.2$; $\rho_{\rm Frank}\left(\alpha=\pm30\right)\approx \left[0.86,-0.43\right]$; whereas for $m=5$, $\rho_{\rm FGM}\left(\theta=\pm1\right)\approx \pm 0.3$; $\rho_{\rm Frank}\left(\alpha=\pm30\right)\approx \left[0.94,-0.88\right]$.
    }%
    \label{fig-all-results}%
\end{figure*}

The next two remarks provide theoretical insights about the system behavior in the high-SNR regime. 
\begin{remark}
In view of \eqref{asy1}, we see that the linear correlation decreases (increases) the average capacity in the high SNR regime for a fixed average receive SNR $\overline\gamma$ and positive (negative) dependence. This can be understood considering that an increment (decrement) on the correlation also increases (decreases) the variance of the product channel gains $g_f \cdot g_b$, which decreases (increases) the average capacity. 
\end{remark}
\begin{remark}% [Independence of linear correlation on high SNR regime]
By inspection of \eqref{asy2}, we state that in the high-SNR regime with a fixed transmit power, the linear correlation coefficient does not affect average capacity, only the fading severity $m$. This is justified as follows. For a fixed $P_T$ in \eqref{eqsys} the $\hat\gamma$ is also fixed, and $\overline\gamma=\hat\gamma\mathbb{E}\left[g_fg_b\right]$ will vary depending on $\mathbb{E}\left[g_fg_b\right]$. Since $\mathbb{E}\left[g_fg_b\right]=\frac{m+\rho}{m}\lessgtr1$ for positive/negative dependence, the average SNR increases (decreases) as $\rho$ increases (decreases), which compensates the decrement (increment) of capacity related to the increase (decrease) on $\rho$.
\end{remark} 

\begin{lemma}
\label{lemma:asy-capacity-low}
	In the low-SNR regime, the average capacity of a BC system can be approximated as
\begin{align}
%\overline{\mathcal{C}}(\overline{\gamma})&\approx\log_2(e)\overline{\gamma},\\
\overline{\mathcal{C}}(\hat{\gamma})\approx\log_2(e)\left(1+\tfrac{\rho}{m}\right)\hat{\gamma}.
\label{asy-capacity-low}
\end{align}
\end{lemma}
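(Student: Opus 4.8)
The plan is to obtain the low-SNR approximation by performing a Taylor expansion of the capacity integrand and retaining only the leading-order term in $\hat{\gamma}$. Since $\log_2(1+\gamma)\approx \log_2(e)\,\gamma$ for small $\gamma$, and in the low-SNR regime the instantaneous SNR $\gamma=\hat{\gamma}\,g_f g_b$ concentrates near zero (being proportional to the small parameter $\hat{\gamma}$), I would argue that the average capacity in \eqref{capacity} reduces to
\begin{align}
\overline{\mathcal{C}}(\hat{\gamma})\approx \log_2(e)\int_{0}^{\infty}\gamma\,f_\gamma(\gamma)\,d\gamma = \log_2(e)\,\mathbb{E}[\gamma].\nonumber
\end{align}
Thus the low-SNR capacity is simply $\log_2(e)$ times the mean SNR, a standard result that holds irrespective of the particular fading or dependence model.

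The key step is then to evaluate $\mathbb{E}[\gamma]$ under the assumed dependence structure. From \eqref{eqsys} we have $\mathbb{E}[\gamma]=\hat{\gamma}\,\mathbb{E}[g_f g_b]$, so everything hinges on the mean of the product of the two correlated Gamma fading coefficients. Using the definition of Pearson's correlation together with the normalization $\mathbb{E}[g_f]=\mathbb{E}[g_b]=1$ and the Nakagami-$m$ variance $\mathrm{var}[g_i]=1/m$, I would write $\mathrm{cov}(g_f,g_b)=\rho\sqrt{\mathrm{var}[g_f]\mathrm{var}[g_b]}=\rho/m$. Hence $\mathbb{E}[g_f g_b]=\mathbb{E}[g_f]\mathbb{E}[g_b]+\mathrm{cov}(g_f,g_b)=1+\rho/m$, which matches the average-SNR relation $\overline{\gamma}=\hat{\gamma}(m+\rho)/m$ stated below \eqref{linear-pdf}. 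Substituting back gives
\begin{align}
\overline{\mathcal{C}}(\hat{\gamma})\approx \log_2(e)\left(1+\tfrac{\rho}{m}\right)\hat{\gamma},\nonumber
\end{align}
as claimed.

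I expect the main obstacle to be a rigorous justification of the interchange that turns the capacity integral into the simple mean $\mathbb{E}[\gamma]$. The informal statement ``$\gamma$ is small'' is not automatic, because the integration variable $\gamma$ ranges over all of $(0,\infty)$ regardless of how small $\hat{\gamma}$ is; what is actually small is the \emph{scaling}, and one must control the tail contribution where $\log_2(1+\gamma)$ deviates from its linear approximation. A clean way to handle this is to substitute $\gamma=\hat{\gamma}\,g_f g_b$ before expanding, so that the small parameter appears explicitly inside the logarithm as $\log_2(1+\hat{\gamma}\,g_f g_b)\approx\log_2(e)\,\hat{\gamma}\,g_f g_b$, and then take expectations over the (fixed, $\hat{\gamma}$-independent) joint law of $(g_f,g_b)$; the leading term is precisely $\log_2(e)\,\hat{\gamma}\,\mathbb{E}[g_f g_b]$, with the correction of order $\hat{\gamma}^2\,\mathbb{E}[(g_f g_b)^2]$ being negligible provided the second moment is finite (which holds for Nakagami-$m$). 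A secondary, lesser point worth noting is that although \eqref{asy-capacity-low} is derived here under linear correlation, the argument depends on the dependence structure only through $\mathbb{E}[g_f g_b]$; since $\rho/m$ fully captures this covariance, the same expression extrapolates to the arbitrary-dependence (e.g.\ FGM Copula) case, consistent with the text preceding the lemma.
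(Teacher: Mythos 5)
Your proposal is correct and follows essentially the same route as the paper: the paper's proof likewise linearizes $\log_2(1+\gamma)\approx\log_2(e)\gamma$ to get $\overline{\mathcal{C}}\approx\log_2(e)\overline{\gamma}$ and then substitutes $\overline{\gamma}=\hat{\gamma}\left(1+\tfrac{\rho}{m}\right)$ from Section \ref{seclin}. Your additional derivation of $\mathbb{E}[g_fg_b]=1+\rho/m$ from the covariance and your remark on controlling the tail of the linearization are sound elaborations of details the paper leaves implicit.
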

\begin{proof}
In the low SNR regime, $\log(1+\gamma)\approx\gamma$, which yields $\overline{\mathcal{C}}(\overline{\gamma})\approx\log_2(e)\overline{\gamma}$. Expressing $\overline\gamma$ in terms of $\hat\gamma$ completes the proof.
\end{proof}
\begin{remark}
Observation of lemma \ref{lemma:asy-capacity-low} demonstrates that, on the low SNR regime with a fixed $P_T$, the correlation increases the average capacity whereas it is immaterial for a fixed $\overline\gamma$.

\end{remark}
\begin{remark}%[Communication theretic limits on correlation]
In view of lemma \ref{lemma:asy-capacity-low} we note that there is a communication theoretic bound for the correlation that depends on the fading severity. This bound is coherent with the fact that capacity cannot be negative and it can be expressed as follows: 
$\rho^{(C\geq0)} = - \mathrm{min}(1,m)$. 
\end{remark}

The asymptotic results for the average capacity in the high and low SNR regime have been derived from the linear correlation model in \eqref{linear-pdf}. However, there is a univocal relation between the linear correlation coefficient and dependence parameter of any arbitrary Copula as given by Corollary \ref{cor:lin-corr}.

\section{Numerical Results}
We now evaluate the theoretical expressions previously derived, which are double-checked in all instances with Monte Carlo (MC) simulations. For the case of negative dependence between the RVs, additional simulations are included using the Frank Copula. This allows us to extend the analytical results obtained with the FGM Copula to a wider range of values for negative dependence by means of Frank's Copula parameter $\alpha \in \mathbb{R} \setminus \{0\}$. %, reaching the Frachet-Hoeffding lower bound for $\alpha\rightarrow-\infty$. 
In all figures, we assume a fixed transmit power $P_T$ and vary the fading parameter $m$ and the power correlation coefficient $\rho$.

Fig. \ref{fig-all-results}-(a) shows the behavior of the average capacity in terms of $\hat{\gamma}$ under correlated Nakagami-$m$ backscatter fading channels with $m=2$. It can be seen that %for a fixed system power budget, 
the correlation effect is gradually eliminated in the high-SNR regime: the variance is increased (decreased) for positive (negative) dependence, which increases fading severity, although this effect is compensated by the increase (decrease) in average receive SNR $\overline\gamma$ when positive (negative) dependence is accounted for. 
%We also see that expression \ref{asy2} is tight, and independent of $\rho$. 
Zooming into the range of lower SNR values, we see that positive (negative) dependence improves (worsens) the performance compared to the independent fading case.

In order to better understand the effect of fading dependence in the low SNR regime, we now normalize to average capacity to that of the AWGN case. % (i.e., without fading). 
In Fig. \ref{fig-all-results}-(b) it becomes evident that correlated fading provides a larger capacity under the positive dependence compared to the independent fading case as well as the absence of fading. Noteworthy, negative dependence structures are detrimental for capacity in the low-SNR regime, as stated by the curves obtained with the FGM (theory) and Frank (simulation) Copulas. We also see that this effect becomes more noticeable under a strong fading condition ($m=0.5$) than when a milder one ($m=5$) is considered. % In all instances, the low SNR approximation \eqref{asy-capacity-low} becomes tight, approaching the value $(1+\rho/m)$.

Since capacity in the low/high SNR regimes only depends on fading severity through $m$ and in the Pearson's correlation coefficient, we represent the latter as a function of $m$, together with the Fr\'echet-Hoeffding bounds in Fig. \ref{fig-all-results}-(c).
These bounds represent the maximum and minimum linear correlation that can be achieved with any dependence structure provided that the marginals are Gamma distributed. It should be highlighted that these bounds are not symmetric w.r.t. $\rho = 0$, but they tend to $\pm 1$ as the fading is less severe, i.e., as $m$ increases. 
We observe that the Frank Copula gets close to the Fr\'echet-Hoeffding bounds for $\alpha \pm 30$, and that exhibits a non-symmetric behavior. This is in coherence with Remark 4. We also see that correlation for the case with FGM Copula is now symmetric, since such Copula only is able to model weak dependences, and hence is distant to the Fr\'echet-Hoeffding bounds, thus not reaching the maximum permissive range of $\pm 1$ for $\rho$. 
% Finally, it is confirmed that the lower bound, $\rho^-$, is greater than the minimum permissive correlation to have positive capacity, $\rho^{(C\geq0)}$, as it is expected. 
%
%
\section{Conclusion}
We proved that the correlation between the forward and backward links has an impact on the capacity of BC affected by fading. To this end, we introduced a general framework based on Copula theory to model arbitrary dependence structures over both fading links. We proved that for a fixed transmit power, an increase in the correlation or an increase in the fading severity increases the average capacity in the low SNR regime, whereas in the high SNR regime the correlation does not affect the channel capacity and only fading severity affects this metric. 
Finally, we relied on Copula theory to obtain the bounds of linear correlation that can be achieved with a pair of Gamma distributed RVs, which support the capacity results here obtained.
\appendices

\section{Pearson's correlation bounds}\label{proof:RhoBounds}
The upper bound of linear correlation can be written by substituting %the Fr\'echet-Hoeffding upper bound, 
$C^{+}$ on \eqref{eq:lin-corr}, which yields to a double integral whose integrand can be expressed as: $\min(u_1, u_2) - u_1 u_2$, with $u_1=F_G(g_f)$ and 
$u_2=F_G(g_b)$. Then, we express $\min(u_1,u_2) = u_1 \mathbf{1}(u_1 \geq u_2) + u_2 \mathbf{1}(u_1 < u_2)$, we split the double integral in two integrals and we apply the conditions imposed by the indicators over the integration limits. When we apply such conditions we use the following equality $\mathbf{1}(F_G(g_f) \geq F_G(g_b)) = \mathbf{1}(g_f \geq g_b)$ since CDF functions are non-decreasing functions. After these operations we realize that the two integrals are equivalent and thus we obtain the next expression:
\begin{align}
\rho^+ &= 2 m \int_{g_f \geq 0} \int_{g_b=0}^{g_f} F_G(g_b) \overline{F_G}(g_f) dg_b dg_f
\end{align}
Then, solving the integral for Gamma (Nakagami-$m$ fading) completes the proof. 

For the upper bound we follow a similar approach, expressing $C^- = \max(u_1 + u_2 - 1, 0)$ as the sum of two indicator functions. After applying the indicator functions over the integration limits we obtain the next expression:

\begin{align}
 \rho^- &= \int_0^\infty \Bigg[ \int^{\infty}_{g_b = F_G^{-1}\left(\overline{F}_G(g_f) \right)} 
%	\Big( F_G(g_f) + F_G(g_b) - 1 - F_G(g_f) F_G(g_b) \Big) \mathrm{d}g_b  \nonumber \\
	\Big( u_1 + u_2 - 1 - u_1 u_2 \Big) dg_b \nonumber \\
	&- \int^{F_G^{-1}\left(\overline{F}_G(g_f) \right)}_{g_b = 0} 
%	\Big(   F_G(g_f) F_G(g_b) \Big) \mathrm{d}g_b \mathrm{d}g_f
	\Big( u_1 u_2 \Big) dg_b \Bigg] dg_f
\end{align} 
Finally, computing the two inner integrals completes the proof. 

\section{Proof of Theorem \ref{thm-pdf}}\label{app-pdf}
By exploiting the Corollary \ref{col-joint} and Definition \ref{def-fgm}, the Copula density of the FGM can be obtained as
\begin{align}
c(u_1,u_2)=1+\theta\left((2u_1-1)(2u_2-1)\right)\label{fgm-c}
\end{align}
where $u_1=F_{G}(g_f)$ and $u_2=F_{G}\left(g_b\right)$. The marginal PDFs and CDFs for $g_f$ and $g_b$ are those of the Gamma distribution. % expressed as
%\begin{align}
%F_{G_i}(g_i)=1-e^{-\frac{m_i}{\overline{g}_i}g_i}\sum_{k=0}^{m_i-1}\frac{1}{k!}\big(\frac{m_i}{\overline{g}_i}g_i\big)^k
%\end{align}
%
Thus, the PDF of the product of fading power channel coefficients, $Y=G_f G_b$, can be determined as 
\begin{align} \nonumber
f_Y(y)=I_1+\theta(I_1-2I_2-2I_3+4I_4)
\end{align}
where
\begin{align} \nonumber
&I_1=\int_{0}^{\infty}\tfrac{1}{g_f}f_{G}(g_f)f_{G}\left(\tfrac{y}{g_f}\right)dg_f \\ \nonumber
&I_2=\int_{0}^{\infty}\tfrac{1}{g_f}f_{G}(g_f)f_{G}\left(\tfrac{y}{g_f}\right)F_{G}(g_f)dg_f\\\nonumber
&I_3=\int_{0}^{\infty}\tfrac{1}{g_f}f_{G}(g_f)f_{G}\left(\tfrac{y}{g_f}\right)F_{G}\left(\tfrac{y}{g_f}\right)dg_f \\\nonumber
&I_4=\int_{0}^{\infty}\tfrac{1}{g_f}f_{G}(g_f)f_{G}\left(\tfrac{y}{g_f}\right)F_{G}(y)F_{G}\left(\tfrac{y}{g_f}\right)dg_f
\end{align}
Finally, computing the above integrals and exploiting the fact that $f_{\gamma}(\gamma)=\frac{1}{\hat{\gamma}}f_{Y}\big(\frac{\gamma}{\hat{\gamma}}\big)$ completes the proof.

\section{High-SNR asymptotic capacity}\label{app-cap}
In the high SNR regime, the average capacity can be approximated as in \cite{yilmaz2012unified}
\begin{align}
\overline{\mathcal{C}}(\overline{\gamma})\approx\log_2(\overline{\gamma})+\log_2(e)\frac{d\mathcal{M}(n)}{dn}\Big\vert_{n=0},
\end{align}
where $\mathcal{M}(n)\overset{\Delta}{=}\frac{\mathbb{E}[{\gamma}^n]}{\overline{\gamma}^n}$ are the normalized moments of $\gamma$, which can be expressed from \eqref{linear-pdf} as
\begin{align}
\mathcal{M}(n)=\frac{\Gamma(m+n)^2 \, _2F_1(-n,-n;m;\rho)}{\left({m (m+\rho )}\right)^n\Gamma (m)^2}.\label{moment}
\end{align}
By repeatedly applying the derivative chain rule, the derivative of \eqref{moment} with respect to $n$ can be computed as in \cite{Laureano2016}. After some algebra, we have \eqref{asy1}.
%\begin{align}
%\overline{\mathcal{C}}(\overline{\gamma})\approx\log_2(\overline{\gamma})-\log_2(e)\left\{2\psi(m)-\log(m(m+\rho))\right\},
%\end{align}
 Finally, using the relationship between $\overline\gamma$ and $\hat\gamma$ in Section \ref{seclin}, \eqref{asy2} is obtained.

\bibliographystyle{IEEEtran}
\bibliography{cbs.bib}
\end{document}